 \newtheorem{theorem}{Theorem}
 \newtheorem{remark}{Remark}
\newcommand{\VV}[1]{\textrm{Var}(#1)}
\newcommand{\ii}{\itshape}
\newtheorem{definition}{Definition}
\DeclareSymbolFont{bbold}{U}{bbold}{m}{n} 
\DeclareSymbolFontAlphabet{\mathbbold}{bbold} 
\begin{document}

\title{A New Family of Regression Models for $[0,1]$ Outcome Data: Expanding the Palette}
\author{Eugene D. Hahn \\
Department of Information \\
and Decision Sciences\\
Salisbury University\\
Salisbury, MD 21801 USA}

\maketitle
\begin{abstract}

Beta regression is a popular methodology when the outcome variable $y$ is on the open interval $(0,1)$.   When $y$ is in the closed interval $[0,1]$, it is commonly accepted that beta regression is inapplicable. Instead, common solutions are to use augmented beta regression or censoring models or else to
subjectively transform the endpoints or rescale $y$ to allow beta regression.  We provide an attractive new approach with a family of models that treats the entirety of
$y\in[0,1]$ in a single model without rescaling or the need for the complications of augmentation or censoring. This family provides
the interpretational convenience of a single straightforward model for the expectation of $y \in [0,1]$ over its entirety.  We establish conditions for the existence of a unique MLE and then examine this new family of models from both maximum-likelihood and Bayesian perspectives.  We successfully apply the models to employment data in which augmented beta regression was difficult due to data separation.  We also apply the models to healthcare panel data that were originally examined by way of data transformation.

\noindent \textbf{Keywords:} Beta regression; finite mixture distributions; bounded data; Bayesian inference; longitudinal data.

\end{abstract}


\section{Introduction}
Beta regression modeling for bounded continuous data $y \in (0,1)$ has attracted increasing interest over the past 20 years.  Percentage outcome data such as vulnerability to climate change \citep{tranetal22}, compliance with tax requirements \citep{dezsoetal21}, and health-related quality of life \citep{gheorgeetal17} are recent examples.  Pioneering papers on beta regression include \citet{paolino01}, \citet{kies03},and \citet{ferrari04}.  The beta distribution is appealing due to its simplicity combined with its flexibility for displaying different shapes.
However the flexibility of the beta distribution declines when $y \in [0,1]$.  As a result, alternative models are often used with augmented beta regression being a popular choice.
In this paper we present a new family of models for $y \in [0,1]$ data.  We adopt a mixture modeling approach using beta and power family distributions.  This allows us to model the entirety of $y$ simultaneously without excluding the endpoints as in augmented beta regression, yielding different and complementary research insights.

\section{Literature Review\label{sec:litrev}}
In the standard parameterization of the beta distribution, $y$ follows the PDF
\begin{equation}
p(y|\alpha,\beta)=%
\dfrac{\Gamma(\alpha+\beta)}{\Gamma(\alpha)\Gamma(\beta)}
y^{\alpha-1}(1-y)^{\beta-1},
\label{eq:StdBetaDist}
\end{equation}
where $\alpha>0$ and $\beta>0$ if $0 \leq y \leq 1$.
It is also common to transform such that
\begin{subequations}\label{eq:muphibetatogether}
\begin{align}
\mu_{\beta} & = \alpha/(\alpha+\beta), \label{eq:mubeta} \\
\phi      & = \alpha+\beta \label{eq:phibeta}
\end{align}
\end{subequations}
yielding the alternative parameterization
\begin{equation}
p(y|\mu_\beta,\phi)=%
\dfrac{\Gamma(\phi)}{\Gamma(\mu_{\beta}\, \phi)\Gamma\bigl((1-\mu_\beta)\phi\bigr)}
y^{\mu_{\beta}\, \phi-1}(1-y)^{(1-\mu_\beta)\phi-1},
\label{eq:ReparBetaDist}
\end{equation}
where $0 < \mu_\beta < 1$ and where $\phi > 0$ has an interpretation as a precision component. Next the data analyst collects an outcome variable $y$ having $n_\beta$ observations and specifies
\begin{equation}\label{eq:classicbetareg}
        g(\mu_{\beta_i}) =  \eta_{\beta_i} = b_0 + b_1 x_{1i} + \ldots + b_J x_{Ji}
\end{equation}
where $x$ refers to the predictor variables, $i$ indexes the observations, $j=1, \ldots, J$ indexes the predictor variables, and the link function $g(\cdot)$ is chosen by the data analyst. Then \eqref{eq:ReparBetaDist}  and \eqref{eq:classicbetareg} define a model which we call the classic beta regression model to ease future discussion.  The parameter $\phi$ may be taken to be a single quantity or optionally may be considered to depend on predictors analogously to \eqref{eq:classicbetareg}. Numerous authors have expanded upon classic beta regression. For example spatio-temporal beta regression models have been discussed by \citet{kaufeldetal14} and \citet{mandaletal16}. Time-series beta regression models have been discussed by \cite{rocha08}, \cite{guolo14}, and \citet{dasilva16}.  Random effects and mixed effects versions can be found in \citet{zimprich10},  \citet{hunger12}, and \citet{Verkuilen12}.  \citet{zhaoetal14} examined generalizations of the beta regression dispersion structure.  \citet{SouzaMoura16} described multivariate beta regression models while variable dispersion beta regression having parametric link functions have been discussed by \cite{canterlebayer19}.

One key limitation of the classic beta regression model is that endpoint $y$ values of zero and one traditionally have posed challenges.  Many authors have used the augmented (also called inflated or hurdle) beta regression models for such data \citep{cooketal08, galivs14, wangandluo16, wangandluo17, dibriscoandmigliorati20}.  Here the analyst partitions the original data of sample size $N$ into multiple subsets.  Denote $y_\beta$ as the subset where $y\in(0,1)$ with sample size $n_{y_\beta}$. Next if $y=0$ observations exist, a variable $z_0$ that takes the value 1 when $y=0$ and zero otherwise is created with  $n_{z_0}$ successes and $N - n_{z_0}$ failures.  If required, $z_1$ is analogously created.
The generated data $z_1$ and/or $z_0$ then augments $y_\beta$.
One assumes that the likelihood of observing $z_0$,  $z_1$ and $y_\beta$ are independent \citep{ospina10}, yielding the likelihood function
\begin{equation}
\label{eq:likeZOIB0}
\ell(y|\cdot) = \prod_{i=1}^{n_\beta} p(y_{\beta_i}|\cdot)
\prod_{i=1}^{N} p(z_{0i}|\cdot)
\prod_{i=1}^{N} p(z_{1i}|\cdot)
,
\end{equation}
where a given $z$ may be omitted and $\cdot$ is a placeholder for parameters to be estimated.  Binary choice models are used for $z_0$ and/or $z_1$.  The classic beta regression likelihood,
$\ell(y_\beta|\cdot) = \prod_{i=1}^{n_\beta} p(y_{\beta_i}|\cdot)$,
appears as the first product term in \eqref{eq:likeZOIB0}.

Since the product terms in \eqref{eq:likeZOIB0} are independent, the numerical results from a maximum likelihood augmented beta regression model can be exactly duplicated by estimating a maximum likelihood classic beta regression with $y_\beta$, and then estimating separate binary choice models on $z_0$ and $z_1$ using maximum likelihood \citep{ospina10}.
This may address important research questions.  However, observe that modeling the overall expectation of $y \in [0,1]$ by combining regression results from the tripartite vector of variables $(y_\beta, z_1, z_0)$ and correct sample sizes of $n_\beta, n_{z_0}$, and $n_{z_1}$ will not be straightforward using the results from \eqref{eq:likeZOIB0}.
It might be challenging to explain that the endpoint of some variables, such as a test score of 100\%, arises from an entirely independent statistical process from that of an adjacent value, such as a score of 99.9\%.
Another area where interpretation may be more involved are models for longitudinal data.  If at time $t$ we observe $y_{it}$, $y_{it}$ may fluctuate between product terms in \eqref{eq:likeZOIB0} and be treated as continuous or point-valued depending on $t$.
This may be theoretically awkward and may lead to a loss of precision if the random effect for observation $i$ must be estimated in multiple product terms and somehow combined.
A second solution to this problem is to consider models for censored data \citep{tobin58}.
Doubly censored models are also possible (\citealp{smithson06, ospina10}) as is a combination of censoring and augmentation \citep{hwangetal21}. Censoring models may be more interpretable for data like exam scores but the computational overhead may be nontrivial when time-series or spatiotemporal modeling is of interest.
A third, more recent, solution is to use a mixture distribution that has positive, finite support on $y \in [0,1]$.
Regression modeling using a mixture of the beta and rectangular distributions was first proposed by \citet{bayesetal12} and subsequently considered by \cite{Ribeiroetal21}.  Unfortunately a clear methodology for the endpoints was not proposed.
Instead, \citet{hahn21} proposed a regression model where $y\in[0,1]$ was taken to be a mixture of a beta distribution and a tilting distribution. However, the expectation of the tilting distribution lies in the range $[1/3, 2/3]$, partially constraining marginal inference for $y$ in those models. The current paper proposes models that remove this limitation.

\section{The Tilting Power Distribution\label{sec:tiltingpowerdist}}

Tilted beta regression makes use of the tilting distribution which is a mixture of triangular distributions.
The triangular distribution is attractive in its simplicity but this comes at a cost of flexibility.  \citet{vandorp02b} described the two-sided power distribution as a more flexible alternative to the triangular.  Our requirements are first that the distribution is to provide finite density at the endpoints and second to avoid bimodality which would be less attractive for regression in this context.  Given the above,
we begin with a distribution for $0<y<1$ with PDF
\begin{equation}
p(y|\mu_T)= f_0=%
\begin{cases}
(|\nu|+1) (1-y)^{|\nu|}  & \text{if }  \nu<0,\\
(|\nu|+1) y^{|\nu|}      & \text{if } \nu\geq 0,
\end{cases}
\label{eq:newDistNoEnds}
\end{equation}
with $\nu \in \mathbb{R}$ expressed in terms of the expectation of \eqref{eq:newDistNoEnds}, $\mu_T$, as
\begin{equation}
\nu = \dfrac{2 \mu_T-1}{1/2 -|\mu_T-1/2|}.
\label{eq:invertmuTtonu}
\end{equation}
The sign of $\nu$ (equivalently the position of $\mu_T$ versus the midpoint) governs the tilt with $\nu=0$ ($\mu_T=1/2$) giving the uniform distribution. The triangular distribution with mode of 0 arises when $\nu=-1$ while the triangular distribution with mode of 1 arises when $\nu=1$.  We can describe \eqref{eq:newDistNoEnds} as the tilting power distribution. We find
$E(Y) = \mu_T = \dfrac{\nu I(\nu>0) +1}{|\nu| + 2}$ 
where $I(\cdot)$ is the indicator function and $\text{Var}(Y)  = \dfrac{|\nu| + 1}{(|\nu|+3)(|\nu|+2)^2}$. Here $0 < \mu_T <1$ which improves on the tilting distribution of \citet{hahn21}.
As \eqref{eq:newDistNoEnds} is a special case of the two-sided power distribution, \citet{vandorp02b} give further distributional details.
Figure \ref{fig:tiltpowerdens} contains a plot of \eqref{eq:newDistNoEnds} for several values of $\nu$.

\begin{figure}[ht]
\begin{center}
\caption{Tilting Power Densities for Selected Values of $\nu$}
\label{fig:tiltpowerdens}
\includegraphics[scale=0.7]{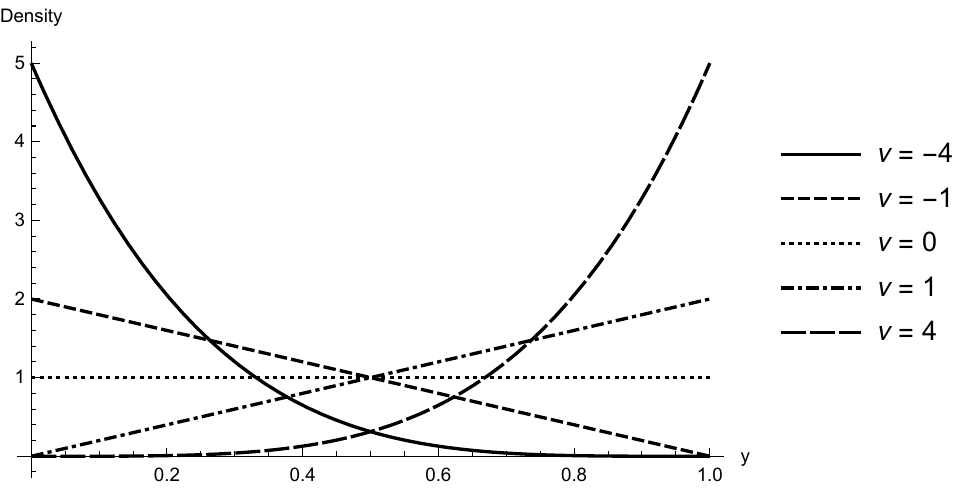}
\end{center}
\end{figure}
We can handle the endpoints by recognizing that the reflected power distribution
\begin{equation}
p(y|\mu_T)=%
\frac{1-\mu_T}{\mu_T}(1-y)^{\frac{1-\mu_T}{\mu_T}-1} 
\label{eq:reflectedpowerdistcleaned}
\end{equation}
always has finite, non-negative likelihood for all $\mu_T$ when $y=0$.  We thus require any observations where $y=0$ to arise from \eqref{eq:reflectedpowerdistcleaned}. Next consider the power distribution
\begin{equation}
p(y|\mu_T)=%
\frac{\mu_T}{1-\mu_T} y^{\frac{\mu_T}{1-\mu_T}-1}
\label{eq:powerdistcleaned}
\end{equation}
and similarly require that observations where $y=1$ arise from \eqref{eq:powerdistcleaned}.  This gives
\begin{equation}
 f_1(\mu_T)=%
\begin{cases}
\dfrac{1-\mu_T}{\mu_T}         & \text{if }  y=0,\\
f_0                            & \text{if } 0< y< 1,\\
\dfrac{\mu_T}{1-\mu_T}         & \text{if }  y=0.
\end{cases}
\label{eq:newDistEnds}
\end{equation}
We now designate $f_1$ and $f_2$ as the two components of the mixture, giving
\begin{equation}
p(y|\alpha, \beta,\mu_T, \theta)=%
(1-\theta) f_1(\mu_T) +
\theta    \dfrac{\Gamma(\alpha+\beta)}{\Gamma(\alpha)\Gamma(\beta)}
y^{\alpha-1+I(y=0)}(1-y)^{\beta-1+I(y=1)}
\label{eq:tiltingpowerbetadist}
\end{equation}
where $0<\theta<1$ is the mixing weight and $f_2$ is the beta component.
The indicator functions in $f_2$ ensure $f_2$ is zero at an endpoint which is viable here due to the presence of $f_1$ with positive density.   Then for \eqref{eq:tiltingpowerbetadist}
\begin{align}
E(Y) &= (1-\theta)\mu_T + \theta \mu_\beta = (1-\theta)\dfrac{\nu I(\nu>0) +1}{|\nu| + 2} + \theta \dfrac{\alpha}{\alpha+\beta},\label{eq:expectationofexpectations1}
\end{align}
and we may write a likelihood function as
\begin{equation}
\label{eq:liketiltingpower}
\ell(y|\alpha,\beta, \mu_T, \theta) = \prod_{i=1}^{N} p(y_{i}|\alpha,\beta, \mu_T, \theta) = \prod_{i=1}^{N}(1-\theta) f_1(y_i|\mu_T) + \theta f_2(y_i|\alpha, \beta).
\end{equation}
To summarize, a key difference is that \eqref{eq:newDistEnds} permits $0 <\mu_T < 1$ which allows for improved fit in modeling versus previous work.  Additional differences are discussed in Section \ref{sec:conc}.

\subsection{Predictive Modeling}

\subsubsection{Model 1 - tilting power beta regression}
We may perform predictive modeling by specifying
\begin{subequations}\label{eq:model1}
\begin{align}
        g_\beta(\mu_{\beta_i}) &= \eta_{\beta_i} = b_0 + b_1 x_{1i} + \ldots + b_J x_{Ji}, \label{eq:mubetamodel1} \\
        g_T(\mu_{T_i})   &= \eta_{T_i} = c_0 + c_1 x_{1i} + \ldots + c_J x_{Ji}, \label{eq:muTmodel1} \\
        g_\phi(\phi_i) &= \eta_{\phi_i}= d_0 + d_1 x_{1i} + \ldots + d_J x_{Ji}, \label{eq:phimodel1} \\
        g_\theta(\theta_i)  &= \eta_{\theta_i}= a_0 + a_1 x_{1i} + \ldots + a_J x_{Ji}. \label{eq:thetamodel1}
\end{align}
\end{subequations}
We drop the subscript for $g$ going forward. Here \eqref{eq:tiltingpowerbetadist}, \eqref{eq:liketiltingpower} and \eqref{eq:model1} resemble those of ordinary finite mixture models \citep[e.g.,][\S1.4.4]{titterington11} upon quick review. However $f_2$ contributes only $n_\beta$ observations to our understanding of $y$ since the indicator functions remove the $N-n_\beta$ endpoints from $f_2$.  Thus we have two latent classes for $y_\beta$ but only one (non-)latent class for the endpoints of $y$.

Suppose we have no interest in possible latent classes and are mainly interested in $y$.  Equating $\mu_\beta$ and $\mu_T$ in \eqref{eq:expectationofexpectations1} provides a useful $\mu_\beta = \mu_T = E(y)$. For \eqref{eq:muTmodel1}, we replace $\mu_T$ with $\mu_\beta$ using \eqref{eq:invertmuTtonu} to obtain the contribution previously provided.  We then continue with estimation of \eqref{eq:liketiltingpower}, \eqref{eq:mubetamodel1}, \eqref{eq:phimodel1}, and \eqref{eq:thetamodel1}.  This special case is called Model 1.  Estimation of \eqref{eq:mubetamodel1} has a useful interpretation as the conditional expectation of $y \in [0,1]$.  Contrast this with augmented beta regression where the endpoints are excluded and probability models for endpoints are used instead.  Label-switching is of no particular importance to real-world interpretation of Model 1 since the means have been equated.

\subsubsection{Model 2}
In Model 1 we allow $\theta$ to be estimated under the presumption of unknown latent classes.  Endpoints contribute to the likelihood through $f_1$ in \eqref{eq:liketiltingpower} while $y_\beta$ contributes through $f_1 $and $f_2$.  In Model 2 we now assign the endpoints to $f_1$, assign $y_\beta$ to $f_2$ only, and then estimate weights $(1-\theta)$ and $\theta$ respectively per \eqref{eq:liketiltingpower}.
We can call Model 2 a semimixture model since the mixture weights are retained but the traditional mixture specification of latent classes and component-specific means has been discarded.

\subsubsection{Model 3 - the endpoint heterogeneous beta regression model}\label{sec:ehbetatheory}

We now further simplify Model 2 by eliminating $\theta$ and \eqref{eq:thetamodel1} while retaining Model 2's $\mu_\beta = \mu_T = E(y)$.  The power function distribution is equivalent to \eqref{eq:StdBetaDist} with $\beta=1$ and it has finite non-zero density at $y=1$. Also the reflected power distribution is equivalent to \eqref{eq:StdBetaDist} with $\alpha=1$ and has finite non-zero density at $y=0$.  Thus we directly apply the relevant distribution to endpoint observations.   Additional details are as follows.
We require that $\mu_\beta$ remains free to vary in \eqref{eq:ReparBetaDist}, hence $\phi$ must be fixed.   Application of \eqref{eq:mubeta} and \eqref{eq:phibeta} yields $\alpha = \mu_{\beta}\, \phi$ and $\beta= (1-\mu_{\beta})\phi$ for insertion into \eqref{eq:StdBetaDist}.
Therefore when $y=1$, applying the constraint $\beta=1$ gives
\begin{equation}\label{eq:mualphapowerdist}
  \alpha=  \frac{\mu_T}{1-\mu_T}
\end{equation}
and in the alternate parameterization we set
\begin{equation}\label{eq:phiphipowerdist}
  \phi=  \frac{1}{1-\mu_T}
\end{equation}
where $\mu_T$ has been used for continuity purposes in discussion of the former distribution component $f_1$ ($\mu_\beta$ could have been used equivalently). When $y=0$, applying the constraint $\alpha=1$ gives
\begin{equation}\label{eq:mubetareflectedpowerdist}
  \beta=  \frac{1-\mu_T}{\mu_T}
\end{equation}
or alternatively
\begin{equation}\label{eq:phiphireflectedpowerdist}
  \phi=  \frac{1}{\mu_T}.
\end{equation}
As in Model 2, estimation of \eqref{eq:muTmodel1} is subsumed into estimation of \eqref{eq:mubetamodel1} and is not performed.  Next we prove a theorem regarding Model 3.

\begin{theorem} \label{thm:classicbetaregression}
  Model 3 constitutes a beta regression model for $y\in[0,1].$
\end{theorem}
\begin{proof}
  Here we use $\mu$ in place of $\mu_\beta$ since both mixture components have the same mean.  Considering the above and \eqref{eq:ReparBetaDist} we can now write a beta distribution as
\begin{equation}
p(y|\mu,\phi^*)=%
\dfrac{\Gamma(\phi^*)}{\Gamma(\mu\, \phi^*)\Gamma\bigl((1-\mu)\phi^*\bigr)}
y^{\mu\, \phi^*-1}(1-y)^{(1-\mu)\phi^*-1}
\label{eq:EHBetaDist}
\end{equation}
where  $\phi^* > 0 \ni \phi^* = \phi + I(y=0)\biggl(\dfrac{1}{\mu}-\phi\biggr) + I(y=1)\biggl(\dfrac{1}{1-\mu}-\phi\biggr)$.  When $y \in (0,1)$ the classic beta regression model applies.  When $y=0$, $\phi^*$ produces the Beta$\bigl(1,(1-\mu)/\mu\bigr)$ distribution in \eqref{eq:StdBetaDist}.  When $y=1$, $\phi^*$ produces the Beta$\bigl(\mu/(1-\mu),1\bigr)$ distribution in \eqref{eq:StdBetaDist}.  Hence \eqref{eq:EHBetaDist} and \eqref{eq:mubetamodel1} constitute a beta regression model for $y\in[0,1]$.
\end{proof}
We can call Model 3 the endpoint heterogeneous beta regression model to remind us that $\phi$ is adjusted at the endpoints to produce $\phi^*$.  Endpoint observations do not provide information about $\phi$ in Model 3, which is also true of augmented beta regression.  Therefore \eqref{eq:phimodel1} is omitted for endpoint values. We can also backpropagate this result to Model 2.  Namely we replace $f_1(y|\mu_T)$ of \eqref{eq:liketiltingpower} with the alternate Beta$\bigl(\mu, \phi^*\bigr)$ distribution.  While Model 3 expands the palette of models for $y\in[0,1]$ data, some estimation challenges may arise.  We now examine this issue. For reference, \eqref{eq:mualphapowerdist} in \eqref{eq:EHBetaDist} implies the power distribution of \eqref{eq:powerdistcleaned} while the reflected power distribution from \eqref{eq:mubetareflectedpowerdist} in \eqref{eq:EHBetaDist} implies \eqref{eq:reflectedpowerdistcleaned}.
Next let $n_0\equiv n_{z_0}$ be the number of observations where $y=0$ and denote this data as $y_0$.  Also let $n_1\equiv n_{z_1}$ be the number of observations where $y=1$ and denote this data as $y_1$.
Then we may write the three separate portions of the log-likelihood function of Model 3 as
 \begin{align}
    \log \ell(y_0|\mu) & =  n_0 \bigl( \log(1-\mu) - \log  (\mu ) \bigr), \label{eq:LLy0} \\
    \log \ell(y_1|\mu) & =  n_1 \bigl(\log (\mu ) - \log \left(1-\mu \right)\bigr),\label{eq:LLy1}\\
    \begin{split}\label{eq:LLybeta}
        \log \ell(y_\beta|\mu,\phi) & =  n_\beta \bigl(\log \Gamma (\phi ) - \log \Gamma (\mu  \phi ) - \log \Gamma (\phi -\mu  \phi )\bigr) \\
            & \quad  +  ( \mu  \phi -1) \sum _{i=1}^{n_\beta} \log \left(y_i\right)
            + ( \phi-\mu  \phi -1)\sum _{i=1}^{n_\beta} \log \left(1-y_i\right).
    \end{split}
\end{align}
Previous authors have examined the behavior of \eqref{eq:LLybeta} \citep[e.g.,][]{ferrari04,kies03}. Therefore we focus on \eqref{eq:LLy0} and \eqref{eq:LLy1}. Observe $y$ appears in \eqref{eq:LLy0} and \eqref{eq:LLy1} only through $n_0$ and $n_1$ respectively (recall \eqref{eq:reflectedpowerdistcleaned} and \eqref{eq:powerdistcleaned}).
Briefly consider $y \notin (0,1)$ despite Model 3 not being intended for such.  First, if we have $y_0$ only (or $y_1$ only) clearly the maximum likelihood will be undefined (we would expect poor behavior for degenerate $y$).  Second, if we have both $y_0$ and $y_1$ where $n_0=n_1$, the likelihood is zero for $0<\mu<1$ since \eqref{eq:LLy0} and \eqref{eq:LLy1} exactly cancel. Again, estimation of Model 3 is not possible. Third, if we have $y_0$ and $y_1$ where $n_0 \neq n_1$, clearly the first $\min[n_0, n_1]$ terms will cancel and the likelihood of the remaining $|n_0-n_1|$ observations will again have an undefined maximum. The cancellation property will be useful below. Furthermore
 \begin{align}
    \frac{d}{d\mu} \log \ell(y_0|\mu) & =  \frac{n_0}{\mu^2-\mu}, \label{eq:scoreLLy0} \\
    \frac{d}{d\mu} \log \ell(y_1|\mu) & = \frac{n_1}{\mu-\mu^2   } \label{eq:scoreLLy1} \\
    \frac{d^2}{d\mu^2} \log \ell(y_0|\mu)& =  n_0 \left(\frac{1}{\mu ^2}  - \frac{1}{(1-\mu )^2}\right), \label{eq:hessLLy0} \\
    \frac{d^2}{d\mu^2} \log \ell(y_1|\mu)&= n_1 \left(\frac{1}{(1-\mu )^2} - \frac{1}{\mu ^2}\right). \label{eq:hessLLy1}
 \end{align}
Challenges include \eqref{eq:scoreLLy0} and \eqref{eq:scoreLLy1} lacking roots and \eqref{eq:hessLLy0} and \eqref{eq:hessLLy1} possibly having the wrong signs depending on the location of $\mu$ versus the midpoint of the range of $y$. However, \eqref{eq:hessLLy0} is negative when $\mu>1/2$. Combining with \eqref{eq:LLybeta} we clearly have a global maximum in the parameter space when $y\in[0,1)$ and $\mu>1/2$ given that \eqref{eq:LLybeta} has its own maximum in the parameter space. Similarly we have a global maximum in the parameter space when $y\in(0,1]$ and $\mu<1/2$ given that \eqref{eq:LLybeta} has its own maximum in the parameter space.  If these conditions do not hold, we may still find we have a local maximum in the vicinity of the maximum for \eqref{eq:LLybeta}.  
The extent to which there is a local maximum will depend on the magnitudes of $|n_0-n_1|, n_\beta, \mu$ and $\phi$.
We now examine the data conditions mentioned in Theorem \ref{thm:classicbetaregression} with respect to the existence of a global maximum.
\begin{definition}
We say the data is $J$-consistent when for a given set of data $\mu_\beta \gtrless 1/2$ and $n_1 \gtreqless n_0$.
\end{definition}
\begin{theorem}
  \label{thm:globalmax}
  Suppose we have $n_0$ observations of $y_0$ and $n_\beta$ observations of $y_\beta$ and that the classic beta regression MLEs exist for $y_\beta$.  Further assume we have $\mu_\beta < 1/2$ and that $\phi$ is large for $y_\beta$.  Then the MLEs of $\mu$ and $\phi$ for Model 3 will have a global optimum in the parameter space as long as $n_\beta \geq N/2$ and the data is $J$-consistent.
\end{theorem}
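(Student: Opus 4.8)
The plan is to show that the Model~3 log-likelihood $L(\mu,\phi)\equiv\log\ell(y_0\mid\mu)+\log\ell(y_\beta\mid\mu,\phi)$ from \eqref{eq:LLy0} and \eqref{eq:LLybeta} attains its supremum over the open set $\Theta=(0,1)\times(0,\infty)$. Since there are no $y_1$ observations here, $N=n_0+n_\beta$, so the hypothesis $n_\beta\ge N/2$ is the same as $n_\beta\ge n_0$; this is the inequality I will use throughout. Let $\psi$ denote the digamma and $\psi_1=\psi'$ the trigamma function.

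The structural core of the argument is that $L(\cdot,\phi)$ is strictly concave in $\mu$ for every fixed $\phi$ as soon as $n_\beta\ge n_0$. Indeed $\partial^2 L/\partial\mu^2$ equals \eqref{eq:hessLLy0} plus the classic beta-regression curvature term $-n_\beta\phi^2\bigl[\psi_1(\mu\phi)+\psi_1((1-\mu)\phi)\bigr]$. Using the elementary bound $\psi_1(x)=\sum_{k\ge0}(x+k)^{-2}>x^{-2}$, this beta term is strictly below $-n_\beta\bigl[\mu^{-2}+(1-\mu)^{-2}\bigr]$, so
\[
\frac{\partial^2 L}{\partial\mu^2}<\frac{n_0-n_\beta}{\mu^2}-\frac{n_0+n_\beta}{(1-\mu)^2}\le 0
\]
exactly because $n_\beta\ge n_0$. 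Hence $\partial L/\partial\mu$ is strictly decreasing in $\mu$. Inspecting its endpoints — by \eqref{eq:scoreLLy0} together with the beta score it behaves like $(n_\beta-n_0)/\mu+O(1)$ as $\mu\to0^+$ (eventually positive when $n_\beta>n_0$) and diverges to $-\infty$ as $\mu\to1^-$ through the $\psi((1-\mu)\phi)$ term — shows that for each $\phi$ there is a unique maximizer $\mu^\star(\phi)\in(0,1)$, which by the implicit function theorem depends continuously on $\phi$.

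It then remains to profile out $\mu$ and show $\bar L(\phi)=L(\mu^\star(\phi),\phi)$ attains a maximum over $(0,\infty)$. This is where the hypothesis that the classic beta log-likelihood \eqref{eq:LLybeta} has a global maximum, attained at some $(\hat\mu_\beta,\hat\phi_\beta)$ with $\hat\phi_\beta>1$, is used: with $M_\beta$ its maximal value, $\log\ell(y_\beta\mid\mu,\phi)\le M_\beta$ everywhere, which already bounds $L$ above on any region with $\mu$ bounded away from $0$. A Stirling expansion of the three $\log\Gamma$ terms in \eqref{eq:LLybeta} shows that the coefficient of $\phi$ in $\log\ell(y_\beta\mid\mu,\phi)$ is $n_\beta H(\mu)+\mu\sum_i\log y_i+(1-\mu)\sum_i\log(1-y_i)$ with $H(\mu)=-\mu\log\mu-(1-\mu)\log(1-\mu)$, and finiteness of $M_\beta$ forces this coefficient to be uniformly negative; hence $\bar L(\phi)\to-\infty$ as $\phi\to\infty$, while the pole of $\Gamma$ at $0$ gives $\bar L(\phi)\to-\infty$ as $\phi\to0^+$. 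Continuity and coercivity of $\bar L$ then deliver $\phi^\star$ with $L$ maximized at $(\mu^\star(\phi^\star),\phi^\star)$, the claimed global optimum in $\Theta$.

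The main obstacle is the boundary $\mu\to0^+$, where \eqref{eq:LLy0} contributes $-n_0\log\mu\to+\infty$, so attainment is not automatic; it is rescued only because each of the $n_\beta$ beta factors carries a normalizing constant proportional to $\Gamma(\mu\phi)^{-1}\sim\mu\phi$, contributing $+n_\beta\log\mu$, so the net behavior is $(n_\beta-n_0)\log\mu$ plus lower-order terms — bounded above precisely when $n_\beta\ge n_0=N-n_\beta$. Converting this heuristic into a bound that is uniform in $\phi$ requires separating the regimes $\mu\phi$ small and $\mu\phi$ large and invoking the Stirling estimate in the latter (to control the joint limits $\mu\to0$ with $\phi\to0$ or $\phi\to\infty$); and the equality case $n_\beta=N/2$, being the tightest, must be checked on its own, with the hypothesis $\phi>1$ at the beta optimum supplying the needed slack.
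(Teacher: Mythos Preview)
Your argument is correct and takes a genuinely different route from the paper's. The paper's proof is essentially a decomposition-plus-numerics argument: it splits the Model~3 log-likelihood into (i) the data-dependent terms of \eqref{eq:LLybeta} and (ii) the normalizing constants of \eqref{eq:LLybeta} combined with \eqref{eq:LLy0}, claims each piece separately has an interior optimum (the second by numerically locating $\mu\approx0.247,\ \phi\approx3.319$ when $n_0=n_\beta$), and concludes. Your approach is fully analytical: you establish strict concavity of $L(\cdot,\phi)$ in $\mu$ via the trigamma bound $\psi_1(x)>x^{-2}$, profile to a one-dimensional $\bar L(\phi)$, and then verify coercivity at $\phi\to0^+$ and $\phi\to\infty$ using the Gibbs/Jensen inequality $n_\beta H(\mu)+\mu\sum\log y_i+(1-\mu)\sum\log(1-y_i)<0$, which is what makes the Stirling coefficient uniformly negative. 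This buys you an explicit reason the hypothesis $n_\beta\ge N/2$ is the right threshold: it is exactly what forces the $\mu^{-2}$ coefficient in $\partial^2L/\partial\mu^2$ to be nonpositive and the $(n_\beta-n_0)\log\mu$ boundary term to be bounded above as $\mu\to0^+$. The paper's decomposition, by contrast, does not on its face justify why the sum inherits a maximum from its summands (indeed the data-dependent piece alone is linear in $(\mu\phi,\phi)$ and unbounded), so your argument is the more careful of the two. The only place your sketch is thin is the equality case $n_\beta=N/2$ and the uniform-in-$\phi$ boundary control, both of which you flag; filling those in along the lines you indicate (small/large $\mu\phi$ regimes, and using $\hat\phi_\beta>1$ for slack in the tie case) would complete a proof strictly stronger than the paper's.
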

\begin{proof}
See Appendix.
\end{proof}
\begin{remark}[Absence of $J$-consistency]
Suppose the data is not $J$-consistent, i.e., suppose that we have $n_0=n_\beta$ and $n_1=0$ but that $\mu_\beta > 1/2.$   Then as $\mu_\beta$ becomes increasingly larger than 1/2, it is increasingly likely that $\VV{y} > 1/12$.  Note that when $\VV{y} = 1/12$, the mixture has the same variance as the standard uniform distribution.  When $\VV{y} > 1/12$, then the $U$-shaped beta distributions will increasingly be supported and $\phi$ is likely to go below 1.  A value of $\phi$ of 1 or lower indicates a finite MLE will no longer be guaranteed.  Such data will tend to be more consistent with the augmented beta regression model since the augmented beta regression model specifies that the endpoint is treated separately from $y_\beta$ and $\phi$.
\end{remark}
We now extend Theorem \ref{thm:globalmax} as follows.
\begin{theorem}
  \label{thm:globalmaxally}
  Suppose we have $n_0$ observations of $y_0$, $n_1$ observations of $y_1$, and $n_\beta$ observations of $y_\beta$.  Further assume $\phi>1$ in $y_\beta$ and that $J$-consistency holds.  Then the MLEs of $\mu$ and $\phi$ for Model 3 will have a global optimum in the parameter space as long as $n_\beta \geq N/2$, i.e., as long as the non-endpoint observations constitute at least half of the total observations.
\end{theorem}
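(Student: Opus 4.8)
The plan is to reduce Theorem~\ref{thm:globalmaxally} to Theorem~\ref{thm:globalmax} via the cancellation property recorded in the discussion of \eqref{eq:LLy0}--\eqref{eq:LLy1}. Per observation, $y_0$ contributes $\log(1-\mu)-\log\mu$ to the log-likelihood and $y_1$ contributes $\log\mu-\log(1-\mu)$, the exact negative; hence
\[
  \log\ell(y_0|\mu)+\log\ell(y_1|\mu)=(n_0-n_1)\bigl(\log(1-\mu)-\log\mu\bigr),
\]
which is \emph{identically} (not merely up to an additive constant) the contribution of $|n_0-n_1|$ observations sitting at a single endpoint: at $y=0$ when $n_0>n_1$, at $y=1$ when $n_1>n_0$, and nothing at all when $n_0=n_1$. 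So the Model~3 log-likelihood, viewed as a function of $(\mu,\phi)$, coincides with the Model~3 log-likelihood of a reduced data set in which the two endpoint groups have been replaced by this single surrogate group while $y_\beta$ is untouched.

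First I would handle the degenerate case $n_0=n_1$ directly: the endpoint terms vanish, the objective reduces to the classic beta log-likelihood \eqref{eq:LLybeta} on $y_\beta$, and the assumed global optimum for $(\mu,\phi)$ with $\phi>1$ on $y_\beta$ is the claimed global optimum. For $n_0\neq n_1$, assume without loss of generality $n_0>n_1$; the case $n_1>n_0$ follows from the reflection $y\mapsto 1-y$, $\mu\mapsto 1-\mu$, which fixes $\phi$, carries the beta component Beta$(\mu\phi,(1-\mu)\phi)$ to Beta$\bigl((1-\mu)\phi,\mu\phi\bigr)$, swaps the power distribution \eqref{eq:powerdistcleaned} with the reflected power distribution \eqref{eq:reflectedpowerdistcleaned}, and hence interchanges \eqref{eq:LLy0} with \eqref{eq:LLy1}; a global optimum for one configuration is therefore a global optimum for its mirror image.

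Under $n_0>n_1$, the displayed identity exhibits the Model~3 objective as the Model~3 objective of a data set with $n_0'=n_0-n_1$ observations at $y=0$, none at $y=1$, and the original $n_\beta$ observations of $y_\beta$; deleting endpoint observations does not change the classic beta optimum on $y_\beta$, so the hypotheses of Theorem~\ref{thm:globalmax} transfer verbatim. Theorem~\ref{thm:globalmax} then delivers a global optimum once $n_\beta\geq N'/2$ with $N'=n_\beta+n_0'=n_\beta+(n_0-n_1)$, i.e.\ once $n_\beta\geq n_0-n_1$. It remains only to note that the hypothesis $n_\beta\geq N/2$ of Theorem~\ref{thm:globalmaxally}, with $N=n_0+n_1+n_\beta$, is equivalent to $n_\beta\geq n_0+n_1$, and $n_0+n_1\geq n_0-n_1$ since $n_1\geq 0$; so the reduced condition holds with room to spare, and the proof is complete.

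The only real delicacy---rather than a genuine obstacle---is the ``without loss of generality'' bookkeeping: one must verify that the reflection is an honest bijection between Model~3 and its mirror image (it is, because $\phi$ is symmetric in $(\alpha,\beta)$ and the endpoint constraints $\alpha=1$ and $\beta=1$ are swapped by it), and that the $n_0=n_1$ case is dispatched on its own rather than by invoking Theorem~\ref{thm:globalmax}, whose statement tacitly presumes at least one $y=0$ observation. Beyond that the argument is mechanical: the cancellation is exact, the reduction lands in a case already proved, and the inequality $n_\beta\geq n_0+n_1$ is strictly stronger than the $n_\beta\geq n_0-n_1$ that the reduced problem actually needs.
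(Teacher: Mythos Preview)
Your proposal is correct and follows essentially the same route as the paper's proof: both exploit the exact cancellation of the $y_0$ and $y_1$ contributions to reduce to a single-endpoint configuration with $|n_0-n_1|$ effective endpoint observations, invoke Theorem~\ref{thm:globalmax} (with the reflection $y\mapsto 1-y$ handling the $y=1$ side), and observe that $n_\beta\ge N/2$ implies the sharper sufficient condition $n_\beta\ge|n_0-n_1|$. Your write-up is in fact more careful than the paper's in isolating the degenerate case $n_0=n_1$ and in spelling out why the reflection is a genuine bijection of the model.
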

\begin{proof}
  The case where $n_0>0$ and $n_1=0$ was proved above.
  Suppose $n_0=0$ and $n_1>0$.  The proof of Theorem \ref{thm:globalmax}  was based on \eqref{eq:reflectedpowerdistcleaned}.  Here we instead have \eqref{eq:powerdistcleaned}.  Clearly then the global optimum exists in the parameter space when $n_1=n_\beta$ with the associated reflection.
  Finally when both $n_0>0$ and $n_1>0$, we have shown above that only $|n_0-n_1|$ observations will contribute to the likelihood and the remaining endpoint observations will cancel.  Therefore, in this case a global maximum exists as long as $n_\beta \geq |n_0-n_1|$, completing the proof.  In this case, $n_\beta$ may actually be substantially less than $N/2$ and still satisfy $n_\beta \geq |n_0-n_1|$.
\end{proof}

Theorem \ref{thm:globalmaxally} shows that the likelihood of Model 3 has a global maximum under the above conditions.   In practice numerical investigations reveal the bound of Theorem \ref{thm:globalmaxally} can be loosened slightly. As a simple example, suppose we take $n_0 = n_\beta + \delta$ where $\delta$ is a small constant.  Then as $N \rightarrow \infty$, note that $n_\beta +\delta \approx n_\beta \approx N/2$ so estimation is likely well-behaved for small excursions relative to $N$.

In regression models $\mu$ will no longer be a scalar quantity.  A theorem is elusive due to complexity but suppose we have $n_0$ observations of $y_0$, and $n_1$ observations of $y_1$.  Then the contribution of these observations to the likelihood when $\mu$ is non-scalar is
  \begin{equation}
  \sum_{i=1}^{n_0} \log \left(1-\mu_i \right)
-\sum_{i=1}^{n_1}  \log \left(1-\mu_i \right)
+\sum_{i=1}^{n_1}  \log \left(\mu_i \right)
-\sum_{i=1}^{n_0}  \log \left(\mu_i\right).
  \end{equation}
Now we can rewrite \eqref{eq:normconsts0} as used for Theorem \ref{thm:globalmax} for non-scalar $\mu$.
However we then run into difficulty since determining whether a global optimum exists for a regression parameter $b$ also depends on the various properties of $x, y, n_0, n_1, n_\beta$ and $g(\cdot)$.  Thus we may wish to ensure $n_\beta \gg |n_0-n_1|$ for such $b$.  As a practical example, a random slopes model may be more difficult to estimate than a random intercepts model given the difficulty of ensuring $b$ has an MLE.

We see that Theorem \ref{thm:globalmaxally} applies on a per-parameter basis so some parameters may have a global optimum but not others.  A local maximum may be available in the vicinity of the classic beta regression's global maximum as long as $n_\beta$ for that parameter is not too much smaller than $N/2$ for that parameter and $J$-consistency holds. Thus a prudent estimation procedure for Model 3 would be to first estimate the classic beta regression using only $y_\beta$.  The parameter estimates from this initial fit would be stored and then used as initial values for estimation of Model 3 on $y$.

\subsubsection{Model 4}\label{sec:ehbeta4}
We have seen the likelihood may become unbounded for a particular parameter when its observations have predominantly endpoint values.  One approach would be to consider this parameter inestimable and omit it from the model.  We now examine an alternative approach to mitigate this issue.  Consider the $y=1$ endpoint and recall in \eqref{eq:phiphipowerdist} we have fixed $\phi$ to satisfy  $(1-\mu)\phi =1$.  Consider now fixing $\mu$ instead.  In doing so we would set
\begin{equation}
\label{eq:phiphipowerdistCONVERSE}
\mu = \frac{\phi-1}{\phi}
\end{equation}
which holds under the previous requirement that $\phi > 1$.  Now that $\phi$ is free to vary, \eqref{eq:phiphipowerdist} shows that $\mu$ will now approach 1 as $\phi \rightarrow \infty$ (the expectation of the transformed version of \eqref{eq:powerdistcleaned} behaves similarly).  Next let us reexamine the power distribution of \eqref{eq:powerdistcleaned}.  Making the substitution of \eqref{eq:phiphipowerdistCONVERSE} yields a likelihood of $\phi-1$ when $y=1$. We could therefore consider an upper bound for the likelihood of the power distribution to be $\phi-1$ by the above logic considering the role of $\phi$.  The appropriate logic applied to the reflected power distribution of \eqref{eq:reflectedpowerdistcleaned} also gives an upper bound of $\phi-1$ for the likelihood when $y=0$.

We next note an influential concept discussed by \citet[][p.~87]{Verkuilen12}. \cite{Verkuilen12} advised the use of data transformations for endpoints.  One option was to nudge endpoint values away from the endpoint using a personally-chosen delta such as 0.001.   This cause observations to range between, say, 0.001 to 0.999, permitting classic beta regression modeling.  Applying this concept, suppose we consider the limit of \eqref{eq:ReparBetaDist} as both $y$ and $\mu_\beta$ approach the endpoint of 1 at the same rate.  That is, define $\tilde{\mu} \equiv \mu_\beta=y$ and note that if we could vary both $y$ and $\mu_\beta$ infinitesimally away from the endpoint at the same rate, we would have
\begin{equation}
\lim_{\tilde{\mu} \to 1} p(y|\mu_\beta,\phi)= \lim_{\tilde{\mu} \to 1} p(\tilde{\mu}|\tilde{\mu},\phi)=\phi.
\label{eq:limittildemu}
\end{equation}
This would be the upper bound of the likelihood under these conditions and is close to the upper bound $\phi-1$ previously obtained.  While \citep{Verkuilen12} advise sensitivity analyses for the personal choice of delta,  here we do not select a particular numeric delta but instead consider a true infinitesimal in terms of a limit.

Now suppose we have a parameter for which Theorem \ref{thm:globalmaxally} does not hold, $\check{\mu}$.  Writing $\check{\mu} = g^{-1}(\check{\eta})$, for Model 4 we require that the likelihood satisfies
\begin{equation}
    \label{eq:boundy1}
    \dfrac{g^{-1}(\check{\eta})}{1-g^{-1}(\check{\eta}) }\leq U
\end{equation}
using the power distribution and similarly we require that
\begin{equation}
    \label{eq:boundy0}
    \dfrac{1-g^{-1}(\check{\eta})}{g^{-1}(\check{\eta})}\leq U
\end{equation}
using the reflected power distribution, where $U$ is an upper bound such as $(\phi-1)$ or $\phi$.  We now have the endpoints' contribution to the likelihood for Model 4.  This is
 \begin{equation}\label{eq:LLModel4}
    \begin{split}%
    \log \ell(y_i| \cdot) & =  I(y_i=0)I(q_i=0)\bigl( \log(1-\mu_i) - \log  (\mu_i ) \bigr) \\
    & \quad  +  I(y_i=1)I(q_i=0)\bigl( \log  (\mu_i )- \log(1-\mu_i) \bigr) + I(0<y_i<1) \log \ell(y_{\beta_i}|\cdot)\\
    & \quad  + I(y_i=0)I(q_i=1)\min\bigl(\log(1-\dfrac{g^{-1}(\eta_i)}{1-g^{-1}(\eta_i) })
        - \log  (\dfrac{g^{-1}(\eta_i)}{1-g^{-1}(\eta_i) } ) , U_i \bigr)\\
    & \quad  + I(y_i=1)I(q_i=1)\min\bigl(\log(\dfrac{g^{-1}(\eta_i)}{1-g^{-1}(\eta_i) })
        - \log  (1- \dfrac{g^{-1}(\eta_i)}{1-g^{-1}(\eta_i) } ) , U_i \bigr),
    \end{split}
\end{equation}
where $q_i$ takes the value one if Theorem \ref{thm:globalmaxally} does not hold for the parameter and zero otherwise, $U_i$ is a likelihood upper bound for a group mean such as $(\phi-1)$ or $\phi$, and  $\log \ell(y_{\beta_i})$ is the classic beta regression log-likelihood of \eqref{eq:LLybeta}. It can be shown that \eqref{eq:LLModel4} simplifies when $g$ is the logit link. In this case we can then omit $q_i$ and instead place a constraint directly on the parameter in $\check{\eta}$ for which Theorem \ref{thm:globalmaxally} does not hold.  Namely we can require that $|\check{\eta}_i| \leq \log U_i$ when we are estimating $\check{\mu}$. Then the two final terms in \eqref{eq:LLModel4} reduce to $I(y_i=0)\bigl( \log(1-\check{\mu}_i) - \log (\check{\mu}_i ) \bigr) + I(y_i=1)\bigl( \log  (\check{\mu}_i )- \log(1-\check{\mu}_i) \bigr)$. But then application of the constraint $|\eta_i| \leq \log U_i \forall \check{\mu}_i$ implies we no longer need the last two terms of \eqref{eq:LLModel4} because $\check{\mu}$ no longer needs a separate treatment in the log-likelihood function.   Instead we can use the log-likelihood from Model 3.  This is convenient because we can quickly identify situations where Theorem \ref{thm:globalmaxally} may not hold using descriptive statistics. We can then constrain the relevant parameters and use the log-likelihood function from Model 3, namely the sum of \eqref{eq:LLy0}, \eqref{eq:LLy1} and \eqref{eq:LLybeta}. It is rare to see classic beta regression with non-logit link functions, so this simplification can be used often.  However, if $\check{\eta}$ does not arise from a group mean parameter, that is to say if $\check{\eta}$ results from a continuous predictor such as $\check{\eta}_i=\check{b} x_i$ where $x_i$ is not constant across observations, then the simplification based on constraining $\check{b}$ also does not hold.  In this situation \eqref{eq:LLModel4} would again be needed.

Model 4 has some attractive features for analysts who would otherwise be rescaling.   First, there is no need to personally devise multiple delta values and perform sensitivity analyses on them.  In big-data contexts we may find that different delta choices have different kinds of impacts, complicating matters.  Next it is attractive we can constrain some (possibly few) parameters instead of rescaling a larger number of data points. Further the likelihood function can by itself move toward a larger value of $\phi$ if it improves overall model fit without the need of personal analyst choices.  Finally if the corresponding classic beta regression model has appreciable information about $\phi$, $\phi$ can provide anchoring to the likelihood for $\check{\mu}$.  The disadvantage can be seen with reference to \eqref{eq:expectationofexpectations1}.  Recall that a novel aspect of Models 2 and 3 is that they model the expectation of $Y$ over its entire range.  In Model 4 we can only make an approximate statement about the expected value of the endpoint observations (i.e, about $\mu_T$ in \eqref{eq:expectationofexpectations1}).  As a result the equality in \eqref{eq:expectationofexpectations1} must be replaced such that $E(Y)  \approx (1-\theta)\mu_T + \theta \mu_\beta$,
if the Model 4 upper bound is applied to Model 2 (the inequality also carries over to Model 3).  However, it should be noted that both augmented beta regression and classic beta regression using rescaling will also not satisfy the equality given by  \eqref{eq:expectationofexpectations1} when $y \in [0,1]$.  Predicted values from Model 4 may be created as in for Model 3 with the extra step of accounting for $U$ as needed.

\subsection{Simulation}
We examined Model 1 and 3's properties with simulation (Model 2's $b$ and $\phi$ are the same as Model 3's). We drew 50 random samples from the $\textrm{Beta}(\alpha=8,\beta=3)$ distribution to create $y_\beta$.  We estimated $\mu$ and $\phi$ for $y_\beta$ with classic beta regression.  We then added one $y=1$ observation to $y_\beta$ and re-estimated these parameters using Model 3. This process was repeated, adding one $y=1$ endpoint observation at a time, until 50 endpoints had been added ($n_1=50$).  We then removed all of the $y=1$ endpoints and then estimated Model 1 and 3 adding one $y=0$ endpoint at a time as above.  Due to eventual absence of $J$-consistency, we stopped at $n_0=49$ observations.

We present Model 3 results first (see Figure \ref{fig:simulation}).  The $x$-axis midpoint corresponds to $n_0=n_1=0$. Estimates of $\mu$ are plotted with a solid line and these can be compared to the arithmetic mean of $y$ (dashed line). For emphasis we have plotted a black circle on the classic beta regression estimate of $\mu$. We have also plotted estimates of $\log (\phi)$ with a dotted line using the secondary axis at right. On the righthand side of the Figure as $n_1$ increases, the estimate of $\mu$ from Model 3 is slightly lower than the arithmetic mean.  The Fisher information for the (reflected) power distribution, $\left(\mu^2(1-\mu)^2\right)^{-1}$, declines for values of $\mu$ farther from the endpoints whereas the Beta distribution can provide more support for such values in this dataset.  Hence the estimate of $\mu$ initially moves somewhat slowly.  However we see $\mu$ begins to approach the arithmetic mean more rapidly toward the endpoint.  
We also note $\phi$ increases slightly until $n_1=15$, then it begins to decline slightly. The results may call to mind those of contaminated two-component mixture models for outliers \citep[][\S3]{aitkinwilson80}.

\begin{figure}[ht]
\begin{center}
\caption{Model 3 Simulation Results for Different $n_0$ and $n_1$}
\label{fig:simulation}
\includegraphics[scale=0.9]{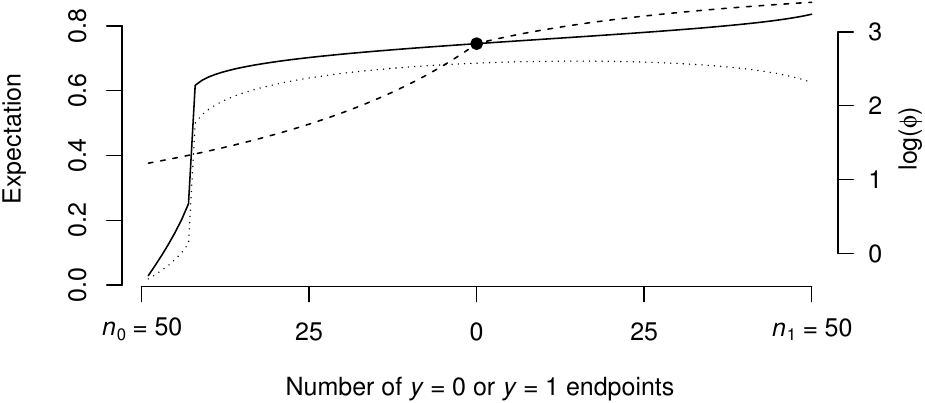}
            {\par \scriptsize{Solid line: Model 3 estimate of $\mu$, dotted line: $\log(\phi)$, dashed line: arithmetic mean of $y$.}}
\end{center}
\end{figure}

The lefthand side of the Figure presents results for data constructed to lack $J$-consistency.  Here Model 3's $\mu$ is less consistent with the arithmetic mean since the endpoint of zero is inconsistent with the bulk of the data.  The arithmetic mean falls quickly with increasing $n_0$ while $\mu$ declines relatively linearly until $n_0=42$.  Then $\log(\phi)$ falls rapidly and goes below zero at $n_0=45$.  From there the $U$-shaped beta distributions are favored and it is increasingly likely the distribution will be bimodal for greater $n_0$.

Figure \ref{fig:simulation} suggests the following practical implications.  First note that the black circle is also the estimate of $\mu$ resulting from augmented beta regression's beta component.  Its insensitivity to changes in $n_1$ and/or $n_0$ illustrates an interpretational challenge of augmented beta regression regarding $E(Y)$ when either $n_0>$ or $n_1>0$.  For very small $n_0$ and $n_1$, Model 3 is likely to lead to similar results as would the practice of rescaling.  For $J$-consistent data, rescaling a large number of values will likely inflate the value of $\phi$, possibly influencing $\phi$ to depend on the analyst's personally-chosen delta.  When the data is not $J$-consistent, rescaling may have a more pronounced and less predictable effect on $\phi$, leading to even greater dependence on the analyst's personal choices.  
In the real-world case where both $n_0>0$ and $n_1>0$ Model 3 may be attractive due to its canceling effects of $n_0$ and $n_1$, the absence of complexity that would be associated with the sensitivity analyses for nudging, and the avoidance of the complexity of augmented beta regression with two endpoints.

\begin{figure}[ht]
\begin{center}
\caption{Model 1 Simulation Results for Different $n_0$ and $n_1$}
\label{fig:simulationM1}
\includegraphics[scale=0.9]{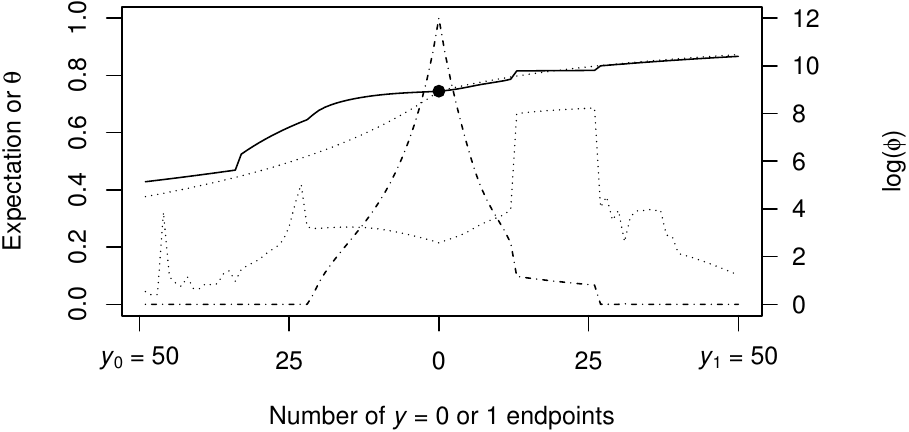}
            {\par \scriptsize{Solid line: Model 1 estimate of $\mu$, dotted line: $\log(\phi)$, dash-dotted line: $\theta$, dashed line: arithmetic mean of $y$. }}
\end{center}
\end{figure}

Figure \ref{fig:simulationM1} displays the simulation results for Model 1.  The new dash-dotted line displays the estimates for $\theta$ as $n_0$ and $n_1$ are varied.  Both $\theta$ and the expectations are on the same scale.  On the right side of the Figure, we see  Model 1 provides estimates of $\mu$ that are very close to the arithmetic mean for $J$-consistent data.  We also see that $\log(\phi)$ jumps when $13 \leq n_1 \leq 26$.  In general the choice of initial values is more important for Model 1 due to the mixture.  On the left side we see that Model 1's $\mu$ tracks the arithmetic mean better than did Model 3.  We also see this improved performance is a result of $\theta$ declining so that the beta component is effectively absent when $n_0 > 20$.  While the performance is improved versus Model 3, we see it is still somewhat difficult to fit a substantially $J$-inconsistent data set.  Augmented beta regression may thus be preferred for strongly $J$-inconsistent data if theory supports treating endpoints as conceptually distinct from non-endpoints.

\section{Applications \label{sec:empirical}}

The Financial Times \citep{FT22} provides annual data on top-100 ranked Master's of Business Adminstration (MBA) programs worldwide.  Our $y$ is the percentage of MBA graduates from a program who are employed at three months after graduation.  One to five observations of $y=100\%$ occurred per year (but no zeros) in the period 2012 to 2021.  The list composition changed from year to year as rankings changed, and occasional ties meant more than 100 programs could appear.  Stata software version 16 with package \texttt{zoib} \citep{buis12} was used for the augmented beta regression.    We had planned to use five independent variables for analysis of an arbitrary year.  However we found that the logit portion of the augmented beta regression model was poorly-behaved except for 2012.  Stata's \texttt{zoib} command produced $d$ estimates with implausibly large values (to be discussed in Table \ref{tbl:convergence}) while Stata's \texttt{logit} command provided a warning message (and sometimes would not produce estimates). Recall \citet{albertanderson84} showed that the logit model may have no MLE if the data exhibits complete or quasicomplete separation. This was the problem here. We report on a smaller more well-behaved model with 2016 data for ease of comparison and defer the problem of data separation until later. Our predictor variables were the percentage of international students enrolled in the program ($x_1$), the percentage of international members on university's board ($x_2$), and the percentage of full-time faculty with doctorates ($x_3$).  Results appear in Table \ref{tbl:ft50results}.
For ease of comparison we place the one-augmented beta regression logit model $d$ coefficients in rows that correspond to the $d$ coefficients from \eqref{eq:thetamodel1}.

\begin{table}[h!t]
\caption{Percent employed after graduation, 2016: Maximum likelihood parameter estimates, standard errors and $z$ statistics.}
\label{tbl:ft50results}
\begin{center}
\tabcolsep=0.10cm
\begin{tabular}{l*{5}{c}}
            &\multicolumn{1}{c}{Augmented}&\multicolumn{1}{c}{}&\multicolumn{1}{c}{}&\multicolumn{1}{c}{}&\multicolumn{1}{c}{Model 2}\\
Para-       &\multicolumn{1}{c}{B.R.}&\multicolumn{1}{c}{Model 1}&\multicolumn{1}{c}{Model 2}&\multicolumn{1}{c}{Model 3}&\multicolumn{1}{c}{with \eqref{eq:thetamodel1}}\\
meter       & ($y_\beta$ \& $z_1$) & ($y$) & ($y$) & ($y$)     & ($y$)      \\
\hline
$b_0$       &      2.2994&      2.3609&      2.3156&      2.3156&      2.3156\\
            &    (0.6444)&    (0.6046)&    (0.6496)&    (0.6496)&    (0.6496)\\
            &    \itshape   3.568&  \itshape     3.905&    \itshape   3.565&     \itshape  3.565&    \itshape   3.565\\
$b_1$       &     -0.0061&     -0.0097&     -0.0069&     -0.0069&     -0.0069\\
            &    (0.0025)&    (0.0027)&    (0.0026)&    (0.0026)&    (0.0026)\\
            &      \ii -2.415&      \ii -3.597&      \ii -2.701&      \ii -2.701&      \ii -2.701\\
$b_2$       &     -0.0004&      0.0021&     -0.0001&     -0.0001&     -0.0001\\
            &    (0.0028)&    (0.0027)&    (0.0028)&    (0.0028)&    (0.0028)\\
            &      \ii -0.152&       \ii 0.775&      \ii -0.037&      \ii -0.037&      \ii -0.037\\
$b_3$       &      0.0009&      0.0026&      0.0013&      0.0013&      0.0013\\
            &    (0.0073)&    (0.0069)&    (0.0073)&    (0.0073)&    (0.0073))\\
            &       \ii 0.131&       \ii 0.370&       \ii 0.184&       \ii 0.184&      \ii  0.184\\
$d_0$       &     -2.2286&            &            &            &	2.2286\\
            &    (4.6172)&            &            &            &(4.6172)\\
            &      \ii -0.483&            &            &            &\ii 0.483\\
$d_1$       &     -0.0507&            &            &            &0.0507\\
            &    (0.0262)&            &            &            &(0.0262)\\
            &      \ii -1.937&            &            &            &\ii 1.937\\
$d_2$       &      0.0154&            &            &            &-0.0154\\
            &    (0.0224)&            &            &            &(0.0224)\\
            &       \ii 0.685&            &            &            &\ii -0.685\\
$d_3$       &      0.0071&            &            &            &-0.0071\\
            &    (0.0503)&            &            &            &(0.0503)\\
            &       \ii 0.142&            &            &            &\ii -0.142\\
$\log(\phi)$&    3.1185  &      3.9916&      3.1325&      3.1325&      3.1325\\
            &    (0.1446)&    (0.3476)&    (0.1428)&    (0.1428)&    (0.1428)\\
            &    \ii 21.568  &      \ii 11.484&      \ii 21.937&      \ii 21.937&      \ii 21.937\\
logit$(\theta)$ &        &      0.3657&      3.1884&            &            \\
            &            &    (0.4697)&    (0.5102)&            &            \\
            &            &       \ii 0.779&       \ii 6.249&            &            \\
\hline
AIC         &   -224.88&   -270.72&   -243.64&   -279.31&   -242.62\\
\hline
\end{tabular}
\caption*{\footnotesize Model outcome variable(s), $y_\beta$ \& $z_1$ or $y$, listed below model name. }
\end{center}
\end{table}

We see that $b_1$ is significant at the $p < 0.05$ level in augmented beta regression while $d_1$ is marginally significant.
The $b_1$ coefficient in Model 1 is also significant and the coefficient is further from zero.  Model 1 has a larger value of $\log(\phi)$ (3.9916 vs. 3.1185)  but is less precisely estimated.  Inverting the logit transformation for Model 1's $\theta$ gives a value of 59.04\%.  This suggests that the tilting power component applies to a sizable portion of the observations.  Model 2 provides similar findings for $b_1$ but $\theta$ is 96.04\% (logit$(\theta) = 3.1884$) since the data set has four observations of 1 out of $N=101$. 
Model 3 omits $\theta$ entirely.  We see the Model 3 coefficients are identical to those of Model 2 excepting the omitted $\theta$ parameter.  The final column of Table \ref{tbl:ft50results} contains results for Model 2 when \eqref{eq:thetamodel1} is estimated.  We see that the value of $d_0$ through $d_3$ replicate those obtained by the one-augmented beta regression model with the exception of a sign change.

\subsection{Model Diagnostics}
One benefit of the proposed Models is many standard model diagnostic techniques will work for the entirety of $y$ with minimal modification since $y$ does not need a tripartite conceptualization.  For example, Figure \ref{fig:scatter2} plots predicted versus observed values for $y\in [0,1]$.  This allows simultaneous examination of model fit properties for endpoint and non-endpoint values.  A residual plot would also be straightforward since we need not concern ourselves with $z_0$ and/or $z_1$.
We can also create a scatter plot of $g(\hat{\mu}_y)$ and $g(\hat{\theta})$ when using Model 2 with \eqref{eq:thetamodel1} (Figure \ref{fig:scatter3}). These plots would be less straightforward to create in augmented beta regression.
Model 3 in particular should inherit many classic beta regression model diagnostics with minor modifications since we have shown it to be a member of the classic beta regression model family.  For example, \citet[][p.~806]{ferrari04} proposed the standardized residual for classic beta regression as
\begin{equation*}
  r_i = \frac{y_i-\mu_i}{\sqrt{\textrm{Var}(y_i) }}
\end{equation*}
where $\textrm{Var}(y_i) = \mu_i(1-\mu_i)/(\phi+1)$. We may use this expression for Model 3 with the revision that  $\textrm{Var}(y_i) = \mu_i(1-\mu_i)/(\phi^*+1)$.

\begin{figure}[ht]
\caption{Example diagnostic plots: 2016 Percent employed after graduation data}
   \label{fig:1-2}
    \begin{minipage}[t]{.45\textwidth}
        \centering
        \includegraphics[width=\textwidth]{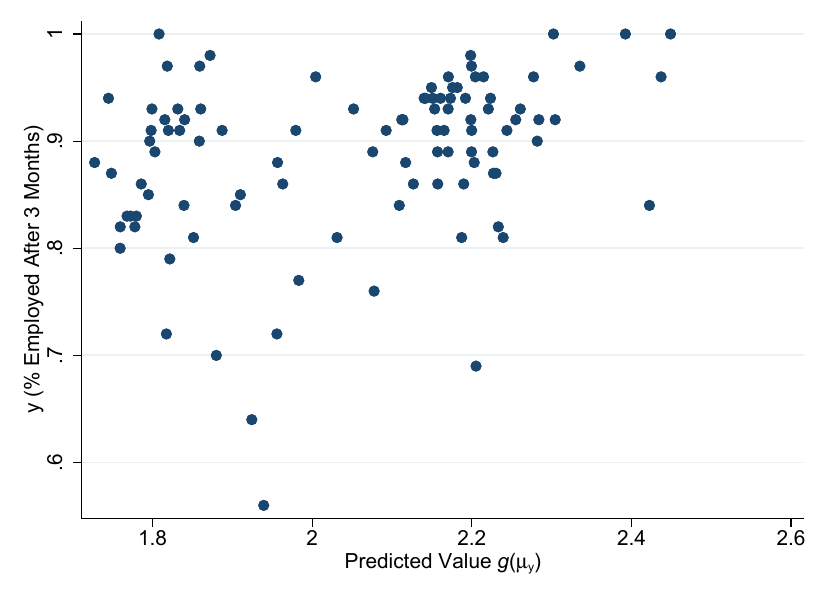}
        \subcaption{Model 3 observed $y$ vs. predicted $g(\mu)$.}\label{fig:scatter2}
    \end{minipage}
    \hfill
    \begin{minipage}[t]{.45\textwidth}
        \centering
        \includegraphics[width=\textwidth]{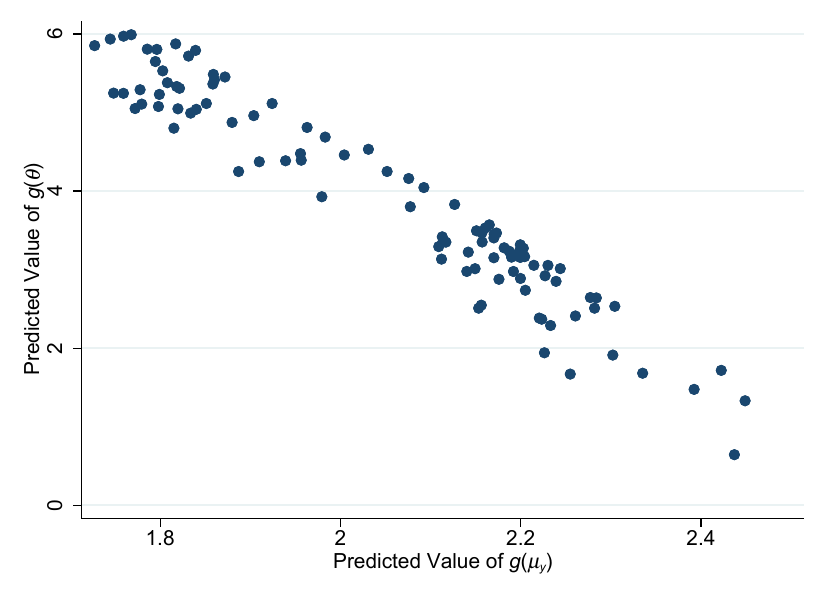}
        \subcaption{Model 2 with \eqref{eq:thetamodel1} }\label{fig:scatter3}
    \end{minipage}
\end{figure}

One application of the models in this paper is to fit data for which augmented beta regression is unsuccessful due to partial or complete data separation \citep{albertanderson84, allison04} in the logit portion of the model. The yearly subsets of our data exhibited partial or complete data separation in all years except one.  Table \ref{tbl:convergence} provides example results for 2014 data.   The $b$ coefficients for the augmented beta regression model appear reasonable but the $d$ ones appear less so.  The $b$ coefficients for Model 3 appear reasonable and also allow us to draw conclusions about $y\in[0,1]$.  Model 2 with \eqref{eq:thetamodel1} omitted produces $b$ estimates that are identical to those of Model 3 as expected (results omitted for economy). Also as expected, Model 2 with \eqref{eq:thetamodel1} produces problematic $d$ estimates just like augmented beta regression (results omitted).  We see this problem is not specific to augmented beta regression but instead is  specific to the binary logit model (which appears in our Model 2 with \eqref{eq:thetamodel1}).  Model 3 omits \eqref{eq:thetamodel1}, avoiding this problem.  \cite{allison04} gives recommendations for addressing data separation.

\begin{table}[ht]
\centering 
\caption{Percent employed after graduation, 2014: Estimates comparison}
\label{tbl:convergence}
\begin{tabular}{l*{2}{cc}}
                    &\multicolumn{2}{c}{Augmented}  &\multicolumn{2}{c}{Model 3}  \\
                    &\multicolumn{2}{c}{B.R.}&\multicolumn{2}{c}{}\\

Para-               & MLE       &             &  MLE            &         \\
meter               & (S.E.)  &  $z$   &  (S.E.)  &  $z$    \\
\hline
$b_0$               &       1.934&     \textit{2.29}&       1.982&    \textit{2.35}\\
                    & (0.8433) &         &  (0.8453)\\
$b_1$               &    -0.0061&     \textit{-2.26}&    -0.0065&     \textit{-2.40}\\
                    & (0.0027)&             & (0.0027)&          \\
$b_2$               &    0.0003&      \textit{0.12}&    0.0006&      \textit{0.20}\\
                    & (0.0027)&             & (0.0027)&          \\
$b_3$               &     0.0020&     \textit{0.22}&     0.0017&     \textit{0.18}\\
                    & (0.0089)&             & (0.0090)&          \\
$d_0$   &      1105.9&      \textit{0.02}&            &            \\
                    & (48627.2)&             &  &          \\
$d_1$   &      -8.648&     \textit{-0.02}&            &            \\
                    & (385.5)&             &  &          \\
$d_2$   &       2.422&      \textit{0.02}&            &            \\
                    & (116.3)&             &  &          \\
$d_3$   &      -12.46&     \textit{-0.02}&            &            \\
                    & (548.3)&             &  &          \\
$\log(\phi)$        &       2.805&     \textit{19.72}&       2.809&     \textit{19.82}\\ \hline
AIC        &    -213.82 &            &         -226.00 &            \\
\hline
\end{tabular}
\end{table}

\subsection{Analysis of Rescaled Healthcare Data\label{sec:gheorge}}
Panel data is a context where augmented beta regression can become challenging.  The models proposed here allow us to examine observation-level dependence in $y_{it}$ where $t$ indexes time, as opposed to nudging values or to modeling $\{y_{\beta i t}, z_{0it}, z_{1it}\}$ which additionally have intrinsic non-temporal correlation by construction. \citet{gheorgeetal17} examined the decline in health-related quality of life (HRQOL) that occurs as people age.  They hypothesized that it is not aging but instead time to death (TTD) that is a more important predictor of health declines in the aged. They used Bayesian mixed beta regression to examine this in a sample drawn from the Netherlands.  Their data was $y \in (0,1]$ where $y=1$ indicates full health as measured by responses to the SF-6D health questionnaire.  \citet{gheorgeetal17} described nudging endpoint values of 1 to 0.99 per \citet{smithson06}.  Here we apply Model 2 to Gheorge et al.'s panel data using portions of their code (available from their publication).


\begin{table}[ht]
\caption{Health-related quality of life: Posterior means and 95\% credible intervals (in parentheses)}
\label{tbl:panelgheorge}
\begin{center}
\footnotesize										
\begin{tabular}{l*{3}{c}}
	       &                &	Beta Regression		     &		Model 2		\\
Variable    & Parameter   &	Rescaled $y\in(0,0.99]$    &	$y\in[0,1]$			\\
\hline								
    &   $b_0$	&	1.282		&		1.236		\\
	&          &	(1.210, 1.355)		&		(1.172, 1.302)		\\
Age & $b_1$  	&	-0.0036		&		-0.0027		\\
    &           &	(-0.0140, 0.0067)		&		(-0.0114, 0.0062)		\\
Gender& $b_2$	&	-0.2179		&		-0.2014		\\
    &           &	(-0.3627, -0.0705)		&		(-0.3293, -0.0731)		\\
TTD	& $b_3$	&	0.0017		&		0.0016		\\
    &           &	(0.0005, 0.0028)		&		(0.0006, 0.0027)		\\
    & $d_0$	    &			&		4.240		\\
	&          &			&		(3.507, 5.066)		\\
Age &$d_1$	&			&		-0.0151		\\
    &			&		     &       (-0.1120, 0.0799)		\\
Gender& $d_2$	&			&		-0.2017		\\
    &           &			&		(-0.3372, -0.0675)		\\
TTD & $d_3$	&			&		0.0001		\\
    &           &		&		(-0.0132, 0.0135)		\\
    &   $\phi$	&	25.89		&		25.91		\\
	&          &	(21.08, 31.20)		&		(20.34, 31.98)		\\
\hline
\end{tabular}
\end{center}
\end{table}
We restrict ourselves to 556 responses data on 356 individuals where missingness did not occur.  Ten observations where HRQOL was 0.99 were transformed back to 1.  We used the functional form of their Table 4 Model 6 which has individual-specific random effects and their priors with one exception. We used $\phi \sim \textrm{Uniform}(3,200)$ while their prior was on the reciprocal of $\phi$.    Estimation was based on two chains started from different initial values.  Each chain had 5000 iterations of burn-in, and a subsequent 200,000 iterations were retained.  The two chains converged to a common distribution for all parameters.  All ESS's for the $b$ parameters were greater than 4500.  All ESS's for the $d$ parameters were greater than 1400. Results appear in Table \ref{tbl:panelgheorge}.
We include descriptors (age, gender, and TTD) to promote comparison with  \citet{gheorgeetal17}.
Table \ref{tbl:panelgheorge} shows that the $b$ coefficients from classic beta regression using rescaled data are similar to those of Model 2.  Model 2's $b$ coefficients are slightly smaller and also have narrower 95\% credible intervals.  Both models indicate males have lower HRQOL than females ($b_2$), and that greater TTD predicts better HRQOL ($b_3$).   Model 2 also produces $d$ coefficients for insights about endpoint observations, extending their findings. Per $d_2$, females are more likely to report perfect health ($y=1$) than are males.

\section{Concluding Remarks\label{sec:conc}}
It is commonly believed that the beta distribution cannot be used for $y\in[0,1]$ data \citep[][p.~124]{zhouandhuang22,ospina10}.
This paper expands the palette of models for $y\in[0,1]$ by providing an endpoint-heterogeneous beta regression model for $y\in[0,1]$ (Model 3), three additional closely-related beta regression models, and allowing the first unconstrained modeling of the marginal expectation without data rescaling or data nudging.
Despite our contrasting modeling approach and likelihood functions, in certain cases we can exactly replicate results obtained by augmented beta regression (up to a possible coefficient sign change which does not affect coefficient $p$ values or model comparison measures).  Hence in addition to our contributions of new models for $y\in[0,1]$ data, we provide a new lens for viewing augmented beta regression and its binary logit formulation.  The models can have varying amounts of parsimony in that we can assign both endpoints to the tilting power component in Model 2 (instead of treating them separately) or by treating all observations in a unified manner (Model 3) using endpoint-heterogeneity.  This parsimony may be useful for interpretation and advantageous in more complex modeling situations such as NMAR data, difference-in-differences estimation  or spatiotemporal applications.  One of the limitations of augmented beta regression is that the logit portion may be inestimable due to data separation as in Table \ref{tbl:convergence}.  This may be more likely when there are only a few endpoint values since it may be easier to perfectly predict the few endpoint values.  The models of this paper do not suffer from this problem as long as \eqref{eq:thetamodel1} is omitted. Furthermore the simulation shows that the current models' performance is particularly good when $|n_0-n_1|$ is small.  Finally, the current models remove the need for sensitivity analyses that are associated with data rescaling.  \citet{gheorgeetal17} described performing sensitivity analyses for various choices of the alteration of the endpoint value when using the data re-analyzed in Section \ref{sec:gheorge}.  This step was not necessary for the current paper's models in Table \ref{tbl:panelgheorge}. As for limitations, a unique MLE may not always exist, particularly when only one endpoint predominates in that parameter's $y$.  The existence can be readily checked for group means and intercepts as shown above.  For slope parameters a simple check remains an area of research and options include omitting the parameter, using Model 4, or possibly even selectively-performed minor amounts of data rescaling.  Finally, we identify conditions where augmented beta regression will likely be preferable to the models presented in this paper (Remark 1) and illustrate that the strategies of data rescaling or data nudging can now be de-emphasized.  Thus the paper provides both new models as well as new strategies for modeling practice in $y\in[0,1]$ data.

\bibliographystyle{ejor06}
\bibliography{betapalettearxivV2}
\appendix
\section*{Appendix}
  Note that \eqref{eq:LLy0} and \eqref{eq:LLy1} in isolation are unlikely to be concave functions, necessitating this proof.  We seek to prove that the likelihood function resulting from combining \eqref{eq:LLy0} with \eqref{eq:LLybeta}, or alternatively \eqref{eq:LLy1} with \eqref{eq:LLybeta}, has an optimum in the range $(0,1)$.  First consider combining \eqref{eq:LLy0} with \eqref{eq:LLybeta}.
  To begin, note that we have a classic beta regression model for $y_\beta$ with global optimum for $(\mu,\phi)$ by assumption.  Accordingly the terms on the second line of \eqref{eq:LLybeta} have a maximum, since the second line of \eqref{eq:LLybeta} is the kernel of \eqref{eq:LLybeta}.  Therefore the existence of an optimum as well as concavity holds for these terms on the second line of \eqref{eq:LLybeta}.  It then remains to consider the first line of \eqref{eq:LLybeta} along with \eqref{eq:LLy0}. 
  Now add $n_0$ observations in which $y=0$ to the data per \eqref{eq:LLy0} and retain only normalizing constants from the first line of \eqref{eq:LLybeta} to produce
  \begin{equation}
  \log \ell_{C_0} = n_0  \bigl( \log(1-\mu) - \log  (\mu ) \bigr)
   +
      n_\beta \bigl( \log {\Gamma (\phi )} -   \log {\Gamma (\mu  \phi )} - \log{ \Gamma (\phi -\mu  \phi )}\bigr).
  \label{eq:normconsts0}
  \end{equation}
  A condition of the proof is $n_\beta \geq N/2$ which implies $n_0 \leq n_\beta$. Upon setting $n_0=n_\beta$ in \eqref{eq:normconsts0} we see these quantities can be temporarily ignored for the purposes of finding an optimum.  Taking the derivative of \eqref{eq:normconsts0} with respect to $\mu$ when $n_0$ and $n_\beta$ are ignored gives
  \begin{equation}\label{eq:normconst0derivmu}
  \frac {\partial\log \ell_{C_0}}{\partial\mu}= \frac{1}{\mu^2-\mu}-\phi\, \psi\bigl( \mu\phi \bigr) +\phi\, \psi\bigl( \phi(1-\mu)\bigr)
  \end{equation}
  where $\psi(\cdot)$ is the digamma function namely $\psi(z) = \textrm{d} \log \Gamma(z)/\textrm{d}z$.  A series expansion of $\psi(\cdot)$  aids in the proof.  As $z \rightarrow \infty$, \cite{abramowitz64} observe that
  \begin{align*}
  \psi(z) & = \log(z) - 1/2z - 1/12z +\cdots\label{eq:digamma}\\
  \psi^\prime(z) & = 1/z + 1/2z^2 + 1/6z^3 + \cdots.
  \end{align*}
  Note that as $\phi \rightarrow \infty$ so do the quantities inside $\psi(\cdot)$ in \eqref{eq:normconst0derivmu}.
  Suppose we truncate the expansion of $\psi(z)$ to the first two terms and accordingly substitute into the righthand side of \eqref{eq:normconst0derivmu}, then set the righthand side equal to zero.  After some manipulation we find that the first-order condition of $\frac {\partial\log \ell_{C_0}}{\partial\mu}=0$ occurs when
  \begin{equation}\label{eq:FOCapprox}
    \frac{2 \mu +1}{2 \left(\mu ^2-\mu \right) \bigl(\log (\mu ) - \log (1-\mu )\bigr)} = \phi.
  \end{equation}
  We can see that for $\phi$ to be large, $\mu$ must approach 1/2 from below, causing $\log (\mu ) - \log (1-\mu )$ to approach zero.
 For the second-order condition we find that
  \begin{equation}\label{eq:normconst2derivmu}
  \frac {\partial^2 \log \ell_{C_0}}{\partial\mu^2}=
 \frac{1}{\mu ^2}-\frac{1}{(1-\mu )^2}
 -\phi ^2 \psi ^\prime(\mu  \phi )-\phi ^2 \psi ^\prime(\phi -\mu  \phi ).
  \end{equation}
We take the first two terms in the expansion of the trigamma $\psi^\prime(\cdot)$.
Inserting \eqref{eq:FOCapprox} into \eqref{eq:normconst2derivmu} we obtain
  \begin{equation}\label{eq:SOCapprox}
   \frac{1 -2 \mu(\mu+1) -\dfrac{2 \mu +1}{2 \log (1-\mu )-2 \log (\mu )}}{(\mu -1)^2 \mu ^2}.
  \end{equation}
  We see the lower denominator is always positive and can be ignored for the purposes of determining the sign.
  Considering the numerator further we see that $1-2\mu(\mu+1)$ tends toward -1/2 as $\mu$ tends to 1/2 from below, while $2\mu+1$ and $2\log(1 - \mu) - 2\log(\mu)$ tend toward 2 and zero respectively as $\mu$ tends to 1/2. Accounting for the minus sign, we confirm the Hessian is negative indicating a maximum and therefore \eqref{eq:normconsts0} is concave under the conditions of the proof.

  To summarize, the second line of \eqref{eq:LLybeta} is concave by assumption and we have shown that the sum of \eqref{eq:LLy0} and of the first line of \eqref{eq:LLybeta} is concave when $n_0 = n_\beta$.  Since a sum of two concave functions is concave, the proof is complete for $n_0 = n_\beta$.  Now consider whether the sum of \eqref{eq:LLy0} and of the first line of \eqref{eq:LLybeta} is concave when $n_0 < n_\beta$.  Applying again the concept of the sum of two concave functions being concave, note that if the sum of \eqref{eq:LLy0} and of the first line of \eqref{eq:LLybeta} is concave when $n_0 = n_\beta$, then this sum is also concave when $n_0 < n_\beta$, as in this instance then contribution of \eqref{eq:LLy0} is smaller than when $n_0=n_\beta$.  This completes the proof for the cases where $n_0 < n_\beta$.

  Next we examine the case where the likelihood function results from adding \eqref{eq:LLy1} to \eqref{eq:LLybeta} and where $n_1=n_\beta$.  It is possible to repeat the above process; however for economy note that this case is a reflection around the midpoint of a case we have already proved.  Namely if we reflect the data by transforming to $1-y$, we may apply the proof above to this transformed variable. Accordingly this proves the existence of an optimum when $n_1 = n_\beta$.  By extension of an above case already proved, this also proves the case when $n_1 < n_\beta$.
\end{document}